\newtheorem{theorem}{\bf Theorem}
\newtheorem{definition}{\bf Definition}
\newcommand{\Rmnum}[1]{\expandafter\@slowromancap\romannumeral #1@}
\begin{document}
\title{Three-Party Energy Management With Distributed Energy Resources in Smart Grid}
\author{Wayes~Tushar,~\IEEEmembership{Member,~IEEE,}
Bo~Chai,~Chau~Yuen,~\IEEEmembership{Senior Member,~IEEE,}
David B. Smith,~\IEEEmembership{Member, IEEE,} Kristin L. Wood,  Zaiyue Yang,~\IEEEmembership{Member, IEEE} and H.~Vincent~Poor,~\IEEEmembership{Fellow,~IEEE}
\thanks{W. Tushar, C. Yuen and K. L. Wood are with the Engineering Product Development at Singapore University of Technology and Design (SUTD), Dover Drive, Singapore 138682. (Email: \{wayes\_tushar, yuenchau, kristinwood\}@sutd.edu.sg).}
\thanks{B. Chai and Z. Yang are with the State Key Laboratory of Industrial Control Technology at Zhejiang University, Hangzhou, China. (Email: chaibozju@gmail.com, yangzy@zju.edu.cn).}
\thanks{David B. Smith is with the National ICT Australia (NICTA)$^\dag$, ACT 2601, Australia. D. Smith is also with the Australian National University. (Email: david.smith@nicta.com.au).}
\thanks{H. Vincent Poor is with the School of Engineering and Applied Sciences at Princeton University, Princeton, NJ, USA. (Email: poor@princeton.edu).}
\thanks{This work is supported by the Singapore University of Technology and Design (SUTD) through the Energy Innovation Research Program (EIRP) Singapore NRF2012EWT-EIRP002-045.}
\thanks{$^\dag$NICTA is funded by the Australian Government through the Department of Communications and the Australian Research Council through the ICT Centre of Excellence Program.}
}
\IEEEoverridecommandlockouts
\maketitle
\begin{abstract}
In this paper, the benefits of distributed energy resources (DERs) are considered in  an energy management scheme for a smart community consisting of a large number of residential units (RUs) and a shared facility controller (SFC). A non-cooperative Stackelberg game between RUs and the SFC is proposed in order to explore how both entities can benefit, in terms of achieved utility and minimizing total cost respectively, from their energy trading with each other and the grid.  From the properties of the game, it is shown that the maximum benefit to the SFC in terms of reduction in total cost is obtained at the unique and strategy proof Stackelberg equilibrium (SE). It is further shown that the SE is guaranteed to be reached by the SFC and RUs  by executing the proposed algorithm in a distributed fashion, where participating RUs comply with their best strategies in response to the action chosen by the SFC. In addition, a charging-discharging scheme is introduced for the SFC's storage device (SD) that can further lower the SFC's total cost if the proposed game is implemented. Numerical experiments confirm the effectiveness of the proposed scheme.
\end{abstract}
\begin{IEEEkeywords}
Smart grid, shared facility, Stackelberg game, energy management, distributed energy resources.
\end{IEEEkeywords}
 \setcounter{page}{1}
\section{Introduction}\label{sec:introduction}
There has been an increasing interest in deploying distributed energy resources (DERs) because of their ability to reduce greenhouse gas emissions and alleviate global warming~\cite{Georgilakis-JTPS:2013}. Moreover, DERs can assist consumers in reducing their dependence on the main electricity grid as their primary source of energy~\cite{Tham-JTSMCS:2013}, and consequently can lower their cost of electricity purchase. The smart grid with enhanced communication and sensing  capabilities~\cite{Sauter-TIE:2011} offers a suitable platform for exploiting the use of DERs to assist different energy entities in effectively managing their energy with reduced dependence on the main grid.

Most literature on energy management, as we will see in the next section, has considered scenarios where users with DERs are also equipped with a storage device (SD)\cite{Justo-J-RSER:2013,Guerrero-JTIE:2013,Guerrero-JTIE:2013-2}. However, in some cases it is also likely that the users are not interested in storing energy for future use, due to the start-up and required size of a storage device. Rather, they are more concerned with consuming/trading energy as soon as it is generated, e.g., grid-integrated solar without a battery back up system~\cite{wt_battery_solar:2013}. Nevertheless, little has been done to study this kind of system. In fact, one key challenge to exploit the real benefit of using DERs in such settings is to develop appropriate system models and protocols, which are not only feasible in real world environments but, at the same time, also beneficial for the associated energy users in terms of their derived cost-benefit tradeoff. Such development further enables understanding the in-depth properties of the system and facilitates the design of suitable real time platforms for next generation power system control functions~\cite{Westermann-TIE:2010}.

To this end, we propose an energy management scheme in this paper for a smart community consisting of multiple residential units (RUs) with DERs and a shared facility controller\footnote{A dedicated authority responsible for managing shared equipments in a community.}  (SFC) using non-cooperative game theory~\cite{Ekneligoda-JTIE:2014}. To the best of our knowledge, ours is the first work to introduce the concept of a shared facility and consider a three-party energy management problem in smart grid applications. With the development of modern residential communities, shared facilities provide essential public services to the RUs, e.g., maintenance of lifts in community apartments. Hence, it is necessary to study the energy demand management of a shared facility for the benefit of the community as a whole. This is particularly necessary in the considered setting where each RU has DERs that can trade energy with both the grid and the SFC, and constitutes an important energy management problem, as we will see later, for both the SFC and RUs. Here, on the one hand, to obtain revenue, each RU would be interested in selling its energy either to the SFC or to the grid based on the prices offered by them, i.e., sell to the party with the higher price. On the other hand, the SFC wants to minimize its cost of energy purchase from the grid by making an offer to RUs such that the RUs would be more encouraged to sell their energy to the SFC instead of the grid. Thus, the SFC would need to buy less energy at a higher price from the grid. Because of the different properties and objectives of each party, the problem is more likely to handle heterogeneous customers than homogeneous ones.

Due to the heterogeneity of the distributed nodes in the system, and considering the independent decision making capabilities of the SFC and RUs, we are motivated to use a Stackelberg game~\cite{Book_Dynamicgame-Basar:1999} to design their behavior. Distinctively, we develop a distributed protocol for the SFC, which is the leader of the game, to determine the buying price from the RUs, such that its total cost of buying energy from the grid and RUs is minimized. Meanwhile, we also show how the followers, i.e., RUs without storage facilities, react in response to the buying price set by the SFC to optimize their payoffs. We extend the study by considering the case when the SFC possesses an SD\footnote{Please note that no RU possesses any SD.}, and further propose a charging-discharging scheme for the SFC that can be implemented in-line with the proposed Stackelberg game.

To this end, the main contributions of the paper are as follows: 1) A system model is proposed to facilitate energy management for the SFC and RUs in the community. Novel cost and utility models are proposed to achieve a good balance between reflecting practical requirements and providing mathematical tractability; 2) A non-cooperative Stackelberg game is proposed to capture the interaction between the SFC and the RUs. The proposed game requires limited communication between the SFC and each RU to solve the energy management problem in a decentralized fashion; 3) The properties of the game are analyzed, and the existence of a unique and strategy-proof solution is proven; 4) An algorithm is proposed that is guaranteed to reach the Stackelberg equilibrium, which can be adopted by the SFC and the RUs in a distributed fashion; and 5) A charging-discharging strategy is proposed for the SFC's storage device based on the price offered by the main grid. The introduced strategy can be implemented, along with the proposed Stackelberg game in each time slot, to further improve the SFC's benefit in terms of its total cost of energy purchase during a day.

The remainder of the paper is organized as follows. We discuss the state-of-the art of energy management research using DERs in Section~\ref{sec:state-of-the-art} followed by a description of the considered system model in Section~\ref{sec:system-model}. The energy management problem is formulated as a Stackelberg game in Section~\ref{sec:problem-formulation}, where we also analyze the properties of the game and  design a distributed algorithm. In Section~\ref{sec:with-storage}, the proposed scheme is extended to the case where the SFC possesses an SD. Numerical examples are discussed in Section~\ref{sec:case-study}, and some concluding remarks are contained in Section~\ref{sec:conclusion}.
\section{State-of-The Art}\label{sec:state-of-the-art}
Recently, there has been considerable research effort to understand the potential of DERs in smart grid~\cite{Vasques-TIE:2010}. This is mainly due to their capability in reducing greenhouse gas emissions, as well as lowering the cost of electricity~\cite{Georgilakis-JTPS:2013}. This literature can be divided into two general categories, where work such as \cite{Justo-J-RSER:2013,Guerrero-JTIE:2013,Guerrero-TIE:2011} and \cite{Guerrero-TIE:2009}  that has studied the feasibility and controls of integrating DERs in smart grid is in the first category. In \cite{Justo-J-RSER:2013}, a comprehensive literature review is provided discussing the connection and controls of AC and DC microgrid systems with DERs and energy storage systems. In \cite{Guerrero-JTIE:2013} advanced control techniques are studied, including decentralized and hierarchical controls for microgrids with distributed generation. A three-level hierarchical control process and electrical dispatching standards are presented in \cite{Guerrero-TIE:2011} with a view to integrating DERs with distributed storage systems in smart grid. A control scheme, using a droop control function for managing battery levels of domestic photovoltaic-uninterruptable power supplies (PV-UPS), is proposed in \cite{Guerrero-TIE:2009}. Other control schemes for efficient use of DERs in smart grid can be found in \cite{Georgilakis-JTPS:2013,Hill-JTSG:2012}, \cite{Liu-STSP:2014}, \cite{Naveed-Energies:2013}, \cite{ Balaguer-TIE:2011} and \cite{Liu-ISGT:2013}.

 The second category of work in this area comprises various energy management/scheduling schemes that have exploited the use of DERs in smart grid. For instance, the authors in \cite{Zhang-J_ECM:2013} study an efficient energy consumption and operation management scheme for a smart building to reduce energy expenses and gas emissions by utilizing DERs. To provide flexibility to distribution system operators, a deterministic energy management scheme is designed in \cite{Kanchev-TIE:2011} for PV generators with embedded storage. An interesting smart grid management system is explored in \cite{Cecati-TIE:2011} that uses DERs to minimize the cost of power delivery including the cost of distributed generators, the cost of power provided by the primary substation, and the cost associated with grid power losses while delivering power to the consumers. In order to minimize the operational cost of renewable integration to distributed generation systems, a forecast based optimization scheme is developed in \cite{Chakraborty-TIE:2007}. Saber \emph{et al}. propose a scheduling and controlling scheme for electric vehicles batteries in \cite{Saber-TIE:2011} so that batteries can be used and integrated with DERs for reducing emissions from electricity production. Further studies of optimization and scheduling techniques that exploit the use of DERs are available in \cite{Ramachandran-TIE:2011} and \cite{Angelis-TII:2013}.

As can be seen from the above discussion, the scope of research on the use of DERs in smart grid is not limited to power and energy research communities such as in \cite{Georgilakis-JTPS:2013,Tham-JTSMCS:2013,Justo-J-RSER:2013} and \cite{Zhang-J_ECM:2013}, but also extends to other research communities including those in smart grid~\cite{Hill-JTSG:2012,Fang-J-CST:2012}, and \emph{industrial electronics} (IE)~\cite{Guerrero-JTIE:2013,Vasques-TIE:2010,Guerrero-TIE:2011,Guerrero-TIE:2009,Balaguer-TIE:2011,Kanchev-TIE:2011,Cecati-TIE:2011,Chakraborty-TIE:2007,Saber-TIE:2011,Ramachandran-TIE:2011,Angelis-TII:2013,Yu-TIE:2011}. However, the majority of these research papers have considered the case in which all the entities with DERs also possess SDs. But this might not always be the case as we have argued in Section~\ref{sec:introduction}. In this regard, unlike the discussed literature, this paper investigates the case in which entities having DERs do not have SDs, by introducing the SFC. We use a noncooperative Stackelberg game to model the energy management scheme considering the distributed and rational nature of the nodes in the smart grid system, and thus complement the discussed previous work in the topic area. The work here has the potential to open new research opportunities for the IE and smart grid communities in terms of control of energy dispatch, size of storage devices and determination of suitable location and size of DERs that might support both the SFC and RUs to further attain different operational objectives in smart grid networks.

We stress that recent work has shown Stackelberg games to be very effective and suitable for designing energy management schemes. For example, in \cite{Maharjan-JTSG:2013}, Maharjan \emph{et al.} propose a Stackelberg game between multiple utility companies and consumers to maximize both the revenue of each utility company and the pay-off to each user. A Stackelberg game approach, using a genetic algorithm to obtain the Stackelberg solution, to maximize the profit of a electricity retailer and to minimize the payment bills of its customers, is proposed in \cite{Meng-JSpringer:2013}. A consumer-centric energy management scheme for smart grids is proposed in \cite{Tushar-TSG:2013} that prioritizes consumers' benefits by reforming their energy trading with a central power station whereby the consumers receive their socially optimal benefits at the Stackelberg equilibrium. A four-stage Stackelberg game is studied in \cite{Bu-JTETC:2013}, and analytical results are obtained via a backward induction process for electricity retailers using real-time pricing. The same authors also study the dynamics of  the smart grid in designing green wireless cellular networks in \cite{Bu-JTWC:2012} using a similar game formulation. A bi-level programming technique is used in \cite{Asimakopoulou-JTSG:2013} to design a Stackelberg game for energy management of multiple micro-grids. However, we remark that the players and their respective strategies in games significantly differ from one game to another according to the system models, design objectives and algorithms that are used. To this end, we propose a suitable system model in the next section, which can facilitate the considered energy management between the SFC, RUs and the grid through a Stackelberg game. 
\section{System Model}\label{sec:system-model}
%
\begin{figure}[t!]
\centering
\includegraphics[width=0.8\columnwidth]{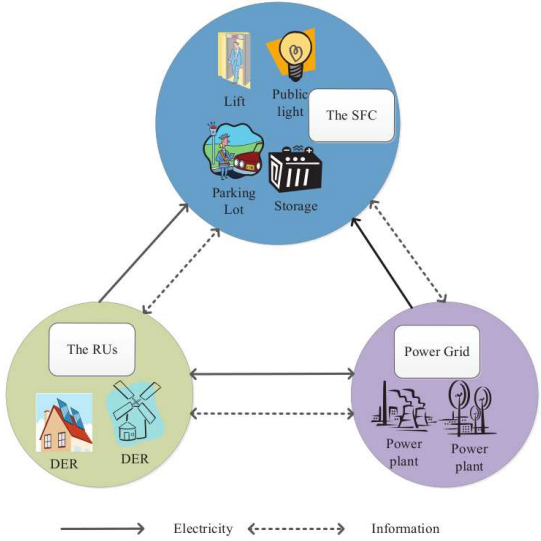}
\caption{System model for energy management in a smart community consisting of residential units, main power grid and a shared facility controller.} \label{fig:system_model}
\end{figure}
Consider a smart community consisting of a large number of RUs and an SFC. The SFC controls the electricity for equipment and machines such as lifts, water pumps, parking lot gates and lights, which are shared and used on a daily basis by the residents of the community. Here, on the one hand, the SFC does not have any electricity generation capability and, hence, needs to buy all its energy either from the main grid or from the RUs in the network.  On the other hand, each RU is assumed to have a DER without any SD that is capable of generating its own energy. An RU can be a single residential unit or a group of units, connected via an aggregator, which can act as a single entity. We assume that each RU can decide on the amount of electricity that it wants to consume, and hence the excess energy, if there is any, that it wants sell to the SFC or to the main grid for making revenue. All RUs and the SFC are assumed to be connected to one another and to the main grid by means of power and communication lines. A schematic diagram of this system is given in Fig.~\ref{fig:system_model}.

To this end, let us assume that there are $N$ RUs in the community and they belong to the set $\mathcal{N}$. Each RU $n\in\mathcal{N}$ is equipped with DERs, e.g., solar panels or wind turbines (or both), that can generate energy $E_n^\text{gen}$ at certain times during the day. We assume that each RU wants to manage its consumption $e_n$ such that it can sell the remainder of its generated energy $(E_n^\text{gen} - e_n)$ to the SFC or the grid to make revenue. Clearly, if $E_n^\text{gen} \leq E_n^\text{min}$, where $E_n^\text{min}$ is the essential load for RU $n$, the RU cannot take part in the energy management program as it cannot afford to sell any energy. Otherwise, as for the considered case, the RU  adjusts its energy consumption $e_n, \text{s.t.,} e_n\geq E_n^\text{min}$, for its own use, and thus sells the remainder $(E_n^\text{gen}-e_n)$ to the SFC or to the main grid.
\begin{figure}[t]
\centering
\includegraphics[width=\columnwidth]{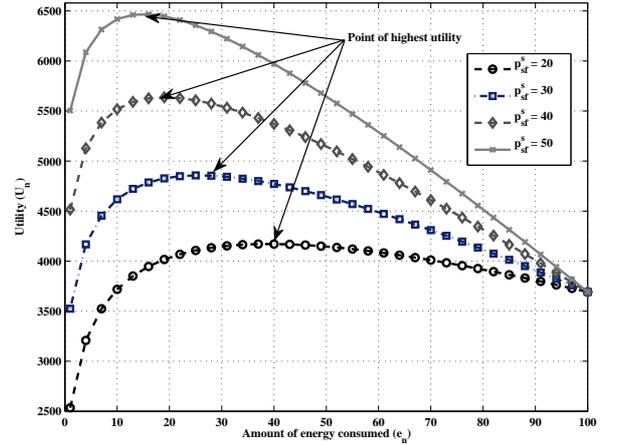}
\caption{Effect of the change of price per unit of energy paid by the SFC to each RU on the change of maximum utility that each RU receives from its energy consumption.} \label{fig:effect-p-utility}
\end{figure}

In general, the buying price $p_g^b$ set by the grid is considerably lower than its selling price $p_g^s$~\cite{McKenna-JIET:2013}. In this regard, we assume that the price per unit of energy that the SFC pays to each RU is set between the buying and selling price of the grid. Therefore, each RU can sell at a higher price, and the SFC can buy at a lower price and they trade energy with each other instead of trading with the main grid.  Under such a setting, it is reasonable to assume that all RUs would be more interested in selling $(E_n^\text{gen}-e_n)~\forall n$ to the SFC instead of the grid. Now, let us assume that the SFC sets a price $p_\text{sf}^s$ per unit of energy to pay to each RU for buying its required energy $E_\text{sf}^\text{req}$. To this end, we propose that the total utility achieved by RU $n$ from its energy consumption $e_n$ and from its trading of energy   $(E_n^\text{gen}-e_n)$ with the SFC is given by
\begin{eqnarray}
U_n = k_n\ln(1 + e_n) + p_\text{sf}^s\left(E_n^\text{gen} - e_n\right),~k_n>0.\label{eqn:utility-ru}
\end{eqnarray}
In \eqref{eqn:utility-ru}, $k_n\ln(1 + e_n)$ is the utility that the RU $n$ achieves from consuming energy $e_n$, where $k_n$ is a preference parameter~\cite{Wayes-J-TSG:2012,Samadi-C-Smartgridcomm:2010}. It is clear from \eqref{eqn:utility-ru} that an RU with higher $k_n$ would be more interested in consuming more $e_n$ to attain its maximum utility level compared to an RU with lower preference. $p_\text{sf}^s\left(E_n^\text{gen} - e_n\right)$ is the revenue that the RU receives from selling the rest of its energy to the SFC. We note that the natural logarithm $\ln(\cdot)$ has been extensively used for designing the utility~\cite{Pavlidou-JCN:2008}, and has also recently been shown to be suitable for designing the utility for power consumers~\cite{Maharjan-JTSG:2013}. We can see that \eqref{eqn:utility-ru} is a concave function and its relationship with $p_\text{sf}^s$ is shown in Fig.~\ref{fig:effect-p-utility}. As shown in Fig.~\ref{fig:effect-p-utility}, as $p_\text{sf}^s$ increases, the maximum utility of the RU shifts towards the left. That is the RU tends to sell more energy to the SFC, e.g., by scheduling the use of its flexible devices~\cite{Zhang-J_ECM:2013} to a later time, and thus becomes more interested in making further revenue.

By contrast, the SFC, having no generation capability, needs to buy all of its required energy from RUs and the main grid. In typical cases, the main grid sells energy at a higher price compared to the price from owners of renewable energy sources as for, e.g., feed-in tariff schemes~\cite{McKenna-JIET:2013}. Hence, it is reasonable to assume that the SFC would mainly be interested in buying energy from the RUs at $p_\text{sf}^s$ to meet its requirement, and procuring the rest, if there is any, from the main grid. Nonetheless, if $p_\text{sf}^s$ is too low, a RU would sell less, or no, energy to the SFC and consume more for its own purposes instead. For example, the resident may want to start washing clothes rather than schedule it at a later time, and thus use their energy instead of selling it at a very low price. Consequently, the SFC will have to buy a larger fraction of its requirement at a higher price from the main grid.  Conversely, if $p_\text{sf}^s$ is too high, e.g., close to the grid's selling price $p_g^s$, it would cost the SFC significantly. Hence, the choice of $p_\text{sf}^s$ should be within a \emph{reasonable} range to encourage the RUs to sell their energy to the SFC, but at the same time keeping the cost to the SFC at a minimum. However, if the energy from RUs is not enough to meet its requirement, the SFC needs to buy the remainder from the main grid with the price $p_g^s$. In this regard, we define a cost function to capture the total cost to the SFC for buying energy from RUs and the grid as follows:
\begin{eqnarray}
C_\text{sf} = \sum_n e_{n,\text{sf}}^s p_{\text{sf}}^s + \left(E_\text{sf}^\text{req} - \sum_n e_{n,\text{sf}}^s\right)p_g^s,
\label{eqn:cost-sfc}
\end{eqnarray}
where $e_{n,^\text{sf}}^s$ is the amount of energy that the SFC buys from the RU $n$. In \eqref{eqn:cost-sfc}, the first term captures the total cost of buying energy from the RUs. Meanwhile, the second term not only describes the cost of buying energy from the grid, but also satisfies the constraint on the demand of total required energy by the SFC, i.e., a SFC does not buy more than it requires.

Now, to decide on energy trading parameters $e_{n}$ and $p_\text{sf}^s$, on the one hand, the SFC interacts with each RU $n\in\mathcal{N}$ to minimize \eqref{eqn:cost-sfc} by choosing a suitable price to pay to each RU. On the other hand, each RU decides on the amount of energy it wants to sell to the SFC by controlling its energy consumption $e_n$ so as to maximize \eqref{eqn:utility-ru}. To this end, we design the interaction and energy trading behavior of each energy entity in the next section.
\section{Energy Management Between RUs and the SFC via Game Theory}\label{sec:problem-formulation}
\subsection{Objective of the RU}\label{sec:objective-ru}
First we note that \eqref{eqn:utility-ru} and \eqref{eqn:cost-sfc} are coupled through $E_n^\text{gen}, p_\text{sf}^s$ and $e_n$. Since the RUs do not have any storage capacity, each RU would desire to sell all its excess energy,
\begin{eqnarray}
e_{n,\text{sf}}^s = E_n^\text{gen} - e_n.\label{eqn:sell-with-generation}
\end{eqnarray}
at a suitable price $p_\text{sf}^s$ to the SFC after adjusting for their consumption $e_n$. To that end, the objective of each RU $n$ can be defined as
\begin{eqnarray}
\max_{e_n}~ U_n,\nonumber\\\text{s.t.,}~e_n\geq E_n^\text{min}.\label{eqn:obj-ru}
\end{eqnarray}
Now from \eqref{eqn:utility-ru} and \eqref{eqn:obj-ru}, the first-order-differential condition for maximum utility is
\begin{eqnarray}
\frac{k_n}{1 + e_n} - p_{\text{sf}}^s = 0,
\end{eqnarray}
and hence
\begin{eqnarray}
e_n = \frac{k_n}{p_\text{sf}^s} - 1,\label{eqn:relation-en-pn}
\end{eqnarray}
which clearly relates the decision making process of each RU to the price set by the SFC. Here, $k_n$ should be sufficiently large such that \eqref{eqn:relation-en-pn} always possesses a positive value for all resulting values of $e_n$ and $p_\text{sf}^s$, and $e_n$ is at least as large as its essential load. From~\eqref{eqn:relation-en-pn}, the amount of energy $e_n$ chosen to be consumed by each RU is inversely proportional to the price per unit of energy paid by the SFC to the RU. As a result, for a higher $p_\text{sf}^s$, the RU $n$ would be more inclined to sell to the SFC by reducing its consumption and vice-versa.
\subsection{Objective of the SFC}\label{sec:objective-sfc}
In contrast, the objective of the SFC is to minimize its total cost of buying energy. Since the SFC does not have any control over the pricing of the grid, it can only set its own buying price $p_\text{sf}^s$ to minimize \eqref{eqn:cost-sfc}. Hence, the objective of the SFC is
\begin{eqnarray}
\min_{p_\text{sf}^s}C_\text{sf}.
\label{eqn:obj-sfc}
\end{eqnarray}
Now, from the first order optimality condition of the SFC's objective function~\eqref{eqn:cost-sfc},
\begin{eqnarray}
\frac{\delta C_\text{sf}}{\delta p_\text{sf}^s} = 0.\label{eqn:cond-sfc-1}
\end{eqnarray}
By replacing $e_{n,\text{sf}}^s$ in \eqref{eqn:cost-sfc} with $(E_n^\text{gen} - e_n)$, and considering the relationship between $e_n$ and $p_\text{sf}^s$ from \eqref{eqn:relation-en-pn}, we obtain
\begin{eqnarray}
\frac{\delta}{\delta p_\text{sf}^s}\Bigg(\sum_n(E_n^\text{gen} &-& k_n + p_\text{sf}^s) + E_\text{sf}^\text{req}p_g^s\nonumber\\ &-& p_g^s\sum_n\left(E_n^\text{gen} - \frac{k_n}{p_\text{sf}^s} + 1\right)\Bigg) = 0.\label{eqn:cond-sfc-2}
\end{eqnarray}
And from \eqref{eqn:cond-sfc-2}, we derive
\begin{eqnarray}
p_\text{sf}^s = \sqrt{\frac{p_g^s\sum_n k_n}{N + \sum_n E_n^\text{gen}}},\label{eqn:price-relation}
\end{eqnarray}
which emphasizes that the optimal price set by the SFC is influenced by the total number of RUs that wish to sell their energy and the generation of their DERs during the considered time. It is also established from~\eqref{eqn:price-relation} that $p_\text{sf}^s$ is affected by the grid's price, which consequently influences the SFC to change its per unit price for the RUs. However, as discussed in Section~\ref{sec:system-model}, to encourage the RUs to always sell their excess energy to the SFC we propose that the choice of price by the SFC is
\begin{eqnarray}
p_\text{sf}^s = \begin{cases}
 \sqrt{\frac{p_g^s\sum_n k_n}{N + \sum_n E_n^\text{gen}}}, & \text{if $p_\text{sf}^s>p_g^b$}\\
 p_g^b + \alpha, & \text{otherwise}.
\end{cases}
\label{eqn:price-relation-2}
\end{eqnarray}
Here, $\alpha>0$ is a small value to keep $p_\text{sf}^s$ higher than $p_g^b$.

It is obvious from \eqref{eqn:price-relation-2} that the SFC can optimize its price in a centralized fashion to minimize its total cost of purchasing energy from the RUs and the grid, if it has full access to the private information of each RU $n$, such as $E_n^\text{gen}$ and $k_n$ . However, in reality, it might not be possible for the SFC to access this information in order to protect the users' privacy, and hence a distributed mechanism is necessary to determine the parameters $p_\text{sf}^s$ and $e_n,~\forall n$. To that end, we propose a scheme based on a non-cooperative Stackelberg game in the following section.
\subsection{Non-cooperative Stackelberg game}\label{sec:stackelberg-game}
A Stackelberg game formally studies the multi-level decision making processes of a number of \emph{independent} decision makers (i.e., followers) in response to the decision taken by the \emph{leading} player (leader) of the game~\cite{Book_Dynamicgame-Basar:1999}. In this section, we formulate a non-cooperative Stackelberg game, where the SFC is the leader, and RUs are the followers, to capture the interaction between the SFC and the RUs. The game is formally defined by its strategic form as
\begin{eqnarray}
\Gamma = \{(\mathcal{N}\cup\{\text{SFC}\}), \{\mathbf{E}_n\}_{n\in\mathcal{N}}, \{U_n\}_{n\in\mathcal{N}}, p_\text{sf}^s, C_\text{sf}\},
\label{eqn:definition-game}
\end{eqnarray}
which consists of the following components:
\begin{enumerate}[i)]
\item The RUs in set $\mathcal{N}$ act as followers and choose their strategies in response to the price set by the SFC, i.e., the leader of the game.
\item $\mathbf{E}_n$ is the set of strategies of each RU $n\in\mathcal{N}$ from which it selects its strategy, i.e., the amount of energy $e_n\in\mathbf{E}_n$ to be consumed during the game.
\item $U_n$ is the \emph{utility function} of each RU $n$ as explained in \eqref{eqn:utility-ru} that captures the RU's benefit from consuming energy $e_n$ and selling energy $(E_n^\text{gen} - e_n)$ to the SFC.
\item $p_\text{sf}^s$ is the price set by the SFC to buy energy from the RUs.
\item The \emph{cost function} $C_\text{sf}$ of the SFC captures the total cost incurred by the SFC for trading energy with RUs and the main grid.
\end{enumerate}

As discussed previously, the objectives of each RU and the SFC are to maximize the utility in \eqref{eqn:utility-ru} and to minimize the cost in \eqref{eqn:cost-sfc} respectively by their chosen strategies. For this purpose, one suitable solution for the proposed game is the Stackelberg equilibrium (SE) at which the leader obtains its optimal price given the followers' best responses. At this equilibrium, neither the leader nor any follower can benefit, in terms of total cost and utility respectively, by \emph{unilaterally} changing their strategy.
\begin{definition}
Consider the game $\Gamma$ defined in \eqref{eqn:definition-game}, where $U_n$ and $C_\text{\emph{sf}}$ are determined by \eqref{eqn:utility-ru} and \eqref{eqn:cost-sfc} respectively. A set of strategies $\left(\mathbf{e}^{*}, p_\text{sf}^{s*}\right)$ constitutes an SE of this game, if and only if it satisfies the following set of inequalities:
\begin{eqnarray}
U_n(\mathbf{e}^{*}, p_\text{\emph{sf}}^{s*})\geq U_n(e_n,\mathbf{e}_{-n}^{*}, p_\text{\emph{sf}}^{s*}),~\forall n\in\mathcal{N},~\forall e_n\in\mathbf{E}_n,\label{eqn:definition-1}
\end{eqnarray}
and
\begin{eqnarray}
C_\text{\emph{sf}}(\mathbf{e}^{*}, p_\text{\emph{sf}}^{s*})\leq C_\text{\emph{sf}}(\mathbf{e}^{*}, p_\text{\emph{sf}}^s),\label{eqn:definition-2}
\end{eqnarray}
where $\mathbf{e}_{-n}^{*} = \left[e_{1}^{*}, e_{2}^{*},\hdots, e_{n-1}^{*}, e_{n+1}^{*},\hdots, e_{N}^{*}\right]$ and $\mathbf{e}^{*} = \left[e_n^*, \mathbf{e}_{-n}^*\right]$.
\label{definition:1}
\end{definition}
Therefore, when all the players in $\left(\mathcal{N}\cup\{\text{SFC}\}\right)$ are at an SE, the SFC cannot reduce its cost by reducing its price from the SE price $p_\text{sf}^{s*}$, and similarly, no RU $n$ can improve its utility by choosing a different energy to $e_n^*$ for consumption.
\subsection{Existence and Uniqueness of SE}\label{sec:existence-SE}
In non-cooperative games, an equilibrium in pure strategies is not always guaranteed to exist~\cite{Book_Dynamicgame-Basar:1999}. Therefore, we need to investigate as to whether there exists an SE in the proposed Stackelberg game.
\begin{theorem}
A unique SE always exists in the proposed Stackelberg game $\Gamma$ between the SFC and RUs in the set $\mathcal{N}$.
\label{theorem:1}
\end{theorem}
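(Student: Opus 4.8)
The plan is to establish the theorem by \emph{backward induction}, which is the natural solution concept for a Stackelberg game: first characterize the followers' best responses for an arbitrary leader price, then substitute these into the leader's objective and optimize over the price. Because the claim combines \emph{existence} and \emph{uniqueness}, I would verify at each stage both that an optimizer exists and that it is unique, and then argue that the composition of two unique maps yields a unique equilibrium pair $(\mathbf{e}^{*}, p_\text{sf}^{s*})$ satisfying \eqref{eqn:definition-1} and \eqref{eqn:definition-2}.

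First I would analyze the follower side. A key structural observation is that each $U_n$ in \eqref{eqn:utility-ru} depends only on the RU's own consumption $e_n$ and the common price $p_\text{sf}^s$, and \emph{not} on the strategies $\mathbf{e}_{-n}$ of the other RUs. Hence, for any fixed $p_\text{sf}^s$ the follower subgame decouples into $N$ independent single-variable optimizations, so I need only the individual optimality of each RU rather than a fixed-point argument over the joint profile. Since $k_n\ln(1+e_n)$ is strictly concave and the revenue term is affine in $e_n$, $U_n$ is strictly concave on $e_n \geq E_n^\text{min}$; its second derivative $-k_n/(1+e_n)^2$ is strictly negative, so the first-order condition already displayed in \eqref{eqn:relation-en-pn} delivers the \emph{unique} best response $e_n^{*}(p_\text{sf}^s) = k_n/p_\text{sf}^s - 1$. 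Here I would invoke the standing assumption that $k_n$ is large enough that this response respects the essential-load constraint, so the interior stationary point is the constrained maximizer.

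Next I would turn to the leader. Substituting the best responses via \eqref{eqn:sell-with-generation}, $e_{n,\text{sf}}^s = E_n^\text{gen} - e_n^{*}(p_\text{sf}^s)$, into the cost \eqref{eqn:cost-sfc} reduces $C_\text{sf}$ to a function of the single scalar $p_\text{sf}^s$ of the form $a\,p_\text{sf}^s + b/p_\text{sf}^s + \text{const}$, with $a = N + \sum_n E_n^\text{gen} > 0$ and $b = p_g^s\sum_n k_n > 0$. The crucial step is to observe that this reduced cost is \emph{strictly convex} on the admissible domain $p_\text{sf}^s > 0$, since its second derivative $2b/(p_\text{sf}^s)^3$ is strictly positive. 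Strict convexity guarantees that the stationary point obtained from the first-order condition \eqref{eqn:cond-sfc-1}, namely the price $p_\text{sf}^{s*}$ given in closed form by \eqref{eqn:price-relation}, is the unique global minimizer. I would also use the definition in \eqref{eqn:price-relation-2} to confirm that whichever branch is active keeps $p_\text{sf}^{s*}$ strictly positive and above $p_g^b$, so the convex-optimization argument applies on the relevant domain.

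Finally, I would assemble the two pieces: the leader problem has a unique minimizer $p_\text{sf}^{s*}$, and feeding this price back through the unique best-response map gives a unique profile $\mathbf{e}^{*}$. By construction the pair $(\mathbf{e}^{*}, p_\text{sf}^{s*})$ satisfies the inequalities of Definition~\ref{definition:1}, and no other pair can, so the SE exists and is unique. I expect the main obstacle to be not existence per se but the careful verification of the two second-order / convexity conditions together with the feasibility bookkeeping --- ensuring the interior follower solution satisfies $e_n \geq E_n^\text{min}$ and that the optimal price lands in the regime assumed by \eqref{eqn:price-relation-2} --- since these are precisely what upgrade ``a critical point exists'' to ``a unique equilibrium exists.''
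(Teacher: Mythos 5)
Your proposal is correct and follows essentially the same route as the paper's own proof: strict concavity of each $U_n$ in $e_n$ yields a unique follower best response for any price, and substituting the best responses into $C_\text{sf}$ yields a strictly convex function of $p_\text{sf}^s$ whose unique minimizer is \eqref{eqn:price-relation}, giving existence and uniqueness of the SE. If anything, your write-up is slightly more careful than the paper's: you make the backward-induction structure and the decoupling of the followers explicit, you retain the factor $p_g^s$ in the second derivative $2p_g^s\sum_n k_n/(p_\text{sf}^s)^3$ (the paper drops it, a harmless typo since $p_g^s>0$), and you flag the feasibility bookkeeping for $e_n \geq E_n^\text{min}$ and the branch condition in \eqref{eqn:price-relation-2}.
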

\begin{proof}
First, we note that the utility function $U_n$ in \eqref{eqn:utility-ru} is strictly concave with respect to $e_n~\forall n\in\mathcal{N}$, i.e., $\frac{\delta^2 U_n}{\delta e_n^2}<0$, and hence for any price $p_\text{sf}^s>0$, each RU $n$ will have a unique $e_n$, chosen from a bounded range $\left[E^\text{min}_n, E_n^\text{gen}\right]$, that maximizes $U_n$. We also note that the game $\Gamma$ reaches the SE when all the players in the game, including each participating RU and the SFC, have their optimized payoff and cost respectively, considering the strategies chosen by all players in the game. Thereby, it is evident that the proposed game $\Gamma$ reaches an SE as soon as the SFC is able to find an optimized price $p_\text{sf}^{s*}$ while the RUs choose their unique energy vector $\mathbf{e}^{*}$.

Now from \eqref{eqn:cond-sfc-2}, given the choices of energy by each RU $n$ in the network, the second derivative of $C_\text{sf}$ is
 \begin{eqnarray}
\frac{\delta^2 C_\text{sf}}{\delta p_\text{sf}^{s^2}} = \frac{2\sum_n k_n}{(p_\text{sf}^s)^3}>0,
 \end{eqnarray}
 and therefore, $C_\text{sf}$ is strictly convex with respect to $p_\text{sf}^s$. Hence, the SFC would be able to find an optimal unique per-unit price $p_\text{sf}^{s*}$ for buying its energy from the RUs  based on their strategies. Therefore, there exists a unique SE in the proposed game, and thus Theorem~\ref{theorem:1} is proved.
\end{proof}
In the next section, we propose an algorithm that all the RUs and the SFC can implement in a distributed fashion to reach the unique SE. We note that it is also possible to solve the energy management problem in a centralized fashion if we have global information such as $E_n^\text{gen}$ and $k_n$  available at the SFC. However, in order to protect the privacy of each RU and also to reduce the demand on communications bandwidth, a distributed algorithm is desired where the optimization can be performed by each RU and the SFC without the need for any private information to be available at the SFC.
\subsection{Distributed Algorithm}\label{sec:algorithm}
\begin{algorithm}[h]
\caption{Algorithm to reach the SE}
\label{alg:1}
\begin{algorithmic}[1]
\small
\STATE Initialization: $p_\text{sf}^{s*}=0$ $C_\text{sf}^*=p_g^s*E_\text{sf}^\text{req}$
\FOR {Buying pricing $p_\text{sf}^s$  from $p_g^b$ to $p_g^s$ }
   \FOR {Each RU $n \in \mathcal{N}$}
        \STATE RU $n$ adjusts its energy consumption $e_n$ according to
        \begin{equation}\label{eqn:alg-1}
           e_n^* = {\rm{arg}}{\kern 1pt} {\kern 1pt} {\kern 1pt} \mathop {\max }\limits_{0 \le {e_n} \le E_n^\text{gen}} {\kern 1pt} {\kern 1pt} {\kern 1pt} [{k_n}\ln (1 + {e_n}) + p_\text{sf}^s(E_n^\text{gen} - {e_n})].
        \end{equation} \\
   \ENDFOR
    \STATE The SFC computes the cost according to
        \begin{equation}\label{eqn:alg-2}
    {C_\text{sf}} = p_\text{sf}^s\sum\limits_{n \in \mathcal{N}} {\left(E_n^\text{gen} - {e_n}\right)}  + p_g^s\left(E_\text{sf}^\text{req} -
     \sum\limits_{n \in \mathcal{N}} {\left(E_n^\text{gen} - {e_n}\right)} \right).
        \end{equation} \\
     \IF {$C_\text{sf} \le C_\text{sf}^*$}
         \STATE The SFC records the optimal price and minimum cost
         \begin{equation}\label{eqn:alg-3}
           p_\text{sf}^{s*}=p_\text{sf}^{s}, C_\text{sf}^*=C_\text{sf}
         \end{equation}
     \ENDIF
\ENDFOR\\
\textbf{The SE $(\mathbf{e}^*, p_\text{sf}^{s*})$ is achieved.}
\end{algorithmic}
\end{algorithm}
In order to attain the SE, the SFC needs to communicate with each RU. We propose an algorithm that all the RUs and the SFC can implement in a distributed fashion to iteratively reach the unique SE of the proposed game. In each iteration, firstly the RU $n$ chooses its best energy consumption amount $e_n$ in response to the price set by the SFC, calculating $e_{n,\text{sf}}^s = (E_n^\text{gen} - e_n)$, and sending this information to the SFC. Secondly, having the information about the choices of energy $\mathbf{e}_\text{sf}^s = [e_{1,\text{sf}}^s, e_{2,\text{sf}}^s,\hdots, e_{N,\text{sf}}^s]$ by all RUs, the SFC decides on its best price that minimizes its total cost according to \eqref{eqn:cost-sfc}. The interaction continues until the conditions in \eqref{eqn:definition-1} and \eqref{eqn:definition-2} are satisfied, and therefore the Stackelberg game reaches the SE. Details are given in Algorithm~\ref{alg:1}.

In the proposed algorithm, the conflict between the RUs' choices of strategies stem from their impact on the choice of $p_\text{sf}^s$ by the SFC. Due to the strict convexity of $C_\text{sf}$, the choice of $p_\text{sf}^{s*}>0$ lowers the cost of the SFC to the minimum. Now, as the algorithm is designed, in response to the $p_\text{sf}^{s*}$, each RU $n$ chooses its strategy $e_n$ from the bounded range $\left[E_n^\text{min}, E_n^\text{gen}\right]$ to maximize its concave utility function $U_n$. Hence, due to the bounded strategy set and the continuity of the utility function $U_n$ with respect to $e_n$, each RU $n$ also reaches a fixed point at which its utility is maximized for the given price $p_\text{sf}^{s*}$~\cite{Maharjan-JTSG:2013}. As a consequence, the proposed algorithm is always guaranteed to converge to the unique SE of the game.

\subsubsection{Strategy-Proof Property}\label{sec:strategy-proof}
Since, each RU plays its best response in Algorithm~\ref{alg:1}, it is important to investigate whether RUs can choose a different strategy or cheat other players in $\Gamma$ once they reach the SE. In this regard, we would like see whether it is possible for an RU to change the amount of energy that it offers to the SFC, i.e., $e_{n,\text{sf}}^s = (E_n^\text{gen}-e_n)$ by changing the energy consumption $e_n$ while using Algorithm~\ref{alg:1}.
\begin{theorem}
It is not possible for any RU $n\in\mathcal{N}$ to be untruthful about its strategy, i.e., sell more or less than what it promises to give the SFC, when all other players including the SFC and the RUs in $\mathcal{N}/\{n\}$ are adopting Algorithm~\ref{alg:1}.
\label{theorem:2}
\end{theorem}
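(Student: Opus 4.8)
The plan is to reduce the strategy-proofness claim to the strict concavity of $U_n$ already established in the proof of Theorem~\ref{theorem:1}, combined with the Stackelberg structure in which the leader fixes the price before the followers respond. The central observation is that once the SFC has committed to the SE price $p_\text{sf}^{s*}$, each RU faces a single-variable strictly concave maximization over the bounded interval $\left[E_n^\text{min}, E_n^\text{gen}\right]$, so its best response is a \emph{unique} point; any untruthful offer corresponds to a feasible consumption level different from this unique optimizer and must therefore yield strictly smaller utility, giving the RU no incentive to misreport.

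First I would fix the SE price $p_\text{sf}^{s*}$ and recall from \eqref{eqn:relation-en-pn} that the unique utility-maximizing consumption of RU $n$ is $e_n^{*} = k_n/p_\text{sf}^{s*} - 1$, so that the amount it promises to sell is $e_{n,\text{sf}}^{s*} = E_n^\text{gen} - e_n^{*}$. Next I would note that being untruthful, i.e.\ selling some $\hat{e}_{n,\text{sf}}^{s}\neq e_{n,\text{sf}}^{s*}$, is equivalent to consuming $\hat{e}_n = E_n^\text{gen} - \hat{e}_{n,\text{sf}}^{s}\neq e_n^{*}$. I would then invoke strict concavity, $\frac{\delta^2 U_n}{\delta e_n^2}<0$, to conclude that $U_n(\hat{e}_n, \mathbf{e}_{-n}^{*}, p_\text{sf}^{s*}) < U_n(e_n^{*}, \mathbf{e}_{-n}^{*}, p_\text{sf}^{s*})$ for every feasible $\hat{e}_n\neq e_n^{*}$. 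The intuition making this sharp is that at $e_n^{*}$ the marginal consumption utility $k_n/(1+e_n^{*})$ exactly equals the selling price $p_\text{sf}^{s*}$, so deviating to sell more (consume less) forfeits consumption utility that strictly exceeds the extra revenue, while deviating to sell less (consume more) loses revenue that strictly exceeds the added consumption utility; in either direction the RU's own payoff strictly decreases.

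The hard part will be ruling out that a unilateral misreport could shift the equilibrium price in RU $n$'s favour and thereby offset the utility loss computed above. Here I would lean on the Stackelberg structure: the SFC is the leader and commits to $p_\text{sf}^{s*}$, while all other players, namely the SFC and every RU in $\mathcal{N}/\{n\}$, continue to execute Algorithm~\ref{alg:1}, so the price seen by RU $n$ remains $p_\text{sf}^{s*}$ and is not a function of RU $n$'s unilateral deviation. With the price thus held fixed, the strictly concave best response pins RU $n$ to $e_n^{*}$, and no profitable deviation exists. Combining this with the strict inequality of the previous step shows that truth-telling is each RU's unique best response at the SE, which establishes Theorem~\ref{theorem:2}.
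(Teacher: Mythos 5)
Your proposal is correct and takes essentially the same route as the paper's proof: both fix the SE price $p_\text{sf}^{s*}$ (the leader's committed action, with all other players following Algorithm~\ref{alg:1}) and observe that relation \eqref{eqn:relation-en-pn}---the unique maximizer of the strictly concave $U_n$---pins RU $n$'s consumption to $e_n^{*}$, so any untruthful offer is strictly suboptimal. The paper phrases this as a terse contradiction (assuming $e_n^{'}\neq e_n^{*}$ and showing \eqref{eqn:relation-en-pn} forces $e_n^{'}=e_n^{*}$), whereas you spell out the underlying strict-concavity and price-invariance reasoning explicitly, but the mathematical content is identical.
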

\begin{proof}
To prove Theorem~\ref{theorem:2}, first we consider that $p_\text{sf}^{s*}$ and $\mathbf{e}^{*} = \left[e_1^*,\hdots,e_n^*,\hdots, e_N^*\right]$ are the SE solutions of the proposed game obtained via Algorithm~\ref{alg:1}. Let us assume that RU $n$ is untruthful, and chooses $e_n^{'}$ instead of $e_n^{*}$ to consume after reaching the SE. Therefore, from \eqref{eqn:relation-en-pn},
\begin{eqnarray}
e_n^{'} = \frac{k_n}{p_\text{sf}^{s*}} - 1,\label{eqn:strategy-proof}
\end{eqnarray}
which is impossible. This is due to the fact that, as the scheme is formulated, $p_\text{sf}^{s*}$ results from Algorithm~\ref{alg:1} only if all the RUs in $\mathcal{N}$ consume the SE amount of energy $e_n^{*}~\forall n$. In this regard, \eqref{eqn:strategy-proof} is only true if $e_n^{'} = e_n^{*}$, which successively proves the \emph{strategy proof} property of the proposed algorithm.
\end{proof}
\section{Energy Management with Storage}\label{sec:with-storage}
We note that the proposed scheme in Section III determines the best price for the SFC to minimize its total cost of energy purchase at any given time. The scheme also benefits the RUs in terms of their energy consumption and trading with the SFC. However, DERs do not provide a stable supply. Sometimes there could be an abundant supply of energy whereas at other times there could be a scarcity. In other words, sometimes the SFC might need to buy less energy from the grid whereas at other times it might need to buy a larger amount. Following from these characteristics, we propose a storage scheme for the SFC in this section that can further reduce its total cost, if the scheme is implemented in conjunction with the Stackelberg game.

We assume that the SFC is equipped with an SD, and the charging and discharging of the SD at different times of the day is carried out based on the time of use (ToU) price~\cite{Wang-JTSG:2013} announced by the grid. The intuition behind considering a ToU price as the baseline for the SD's charging-discharging can be explained as follows: 1) Since the SFC does not know the private information of RUs, such as their energy generation and preferences, it cannot determine how much energy it can buy from them (and the associated cost) ahead of time. Hence, by allowing a ToU price to decide its charging and discharging, the SD can leverage the flexibility of the SFC in trading energy with the grid in the event of energy scarcity at the RU at any time of the day. And 2) It is reasonable and practical to assume that the grid's ToU price is announced ahead of time \cite{Wang-JTSG:2013}. Hence, it would be more practical for the SFC to decide on the charging and discharging pattern of its SD based on a known price that gives a mathematically tractable solution.

To this end, we consider that the total time of energy management during a day is divided into $T$ time slots where each time slot $t$ has a duration of one hour~\cite{Jin-J-TVT:2013}. At $t$, the total requirement of energy $E_\text{sf}^\text{req}(t)$ by the SFC has two components:
\begin{eqnarray}
E_\text{sf}^\text{req}(t) = E_\text{sf}^\text{eqp}(t) + e_\text{sf}^\text{SD}(t),\label{eqn:requirement-withSD}
\end{eqnarray}
where $E_\text{sf}^\text{eqp}(t)$ is the amount of energy exclusively required to run equipment of the shared facility, and $e_\text{sf}^\text{SD}(t)$ is the energy charged-to/discharged-from the SFC's storage at $t$.
 \begin{figure}[t]
\centering
\includegraphics[width=\columnwidth]{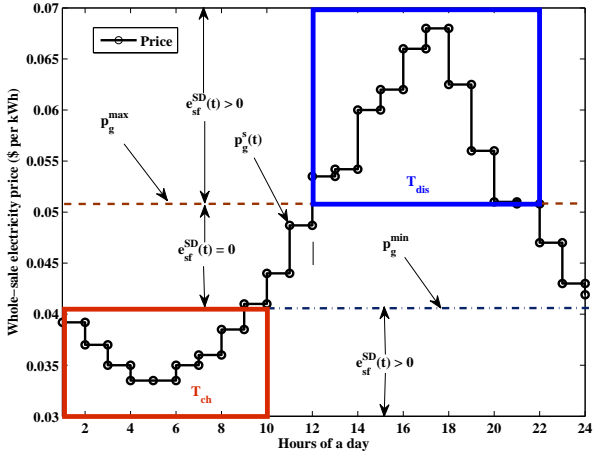}
\caption{Choice of charging and discharging duration of the SFC's SD based on the price announced by the grid.} \label{fig:charging-discharging-duration}
\end{figure}
It is assumed that the announced price per time slot is available from the main grid ahead of time~\cite{Wang-JTSG:2013}. We suppose that the SFC selects two price levels $p_g^\text{min}$ and $p_g^\text{max}$ as the minimum and maximum price thresholds from the announced price list. The SFC charges its battery at time $t$ if $p_g^s(t)<p_g^\text{min}$, and discharges the battery if $p_g^s(t)>p_g^\text{max}$. The duration of time at which these two conditions are satisfied are characterized as the charging duration $T_\text{chg}$ and the discharging duration $T_\text{dis}$ respectively. The choice of $T_\text{chg}$ and  $T_\text{dis}$ based on $p_g^s(t)~\forall t\in T, p_g^\text{min}$ and $p_g^\text{max}$ are shown graphically in Fig.~\ref{fig:charging-discharging-duration}. To that end, the charging and discharging process of the SFC's SD can be implemented as follows:

\textbf{Charging of the SD:}
\begin{enumerate}
\item The SFC, with an initial state of charge (SOC) at the SD $Q_\text{ini}$, determines $T_\text{chg}$ according to the price announced by the grid and selected $p_g^\text{min}$.
\item For each $t\in T_\text{chg}$, the SFC derives $p_g^\text{min} - p_g^s(t)$, and checks the total $\sum_{t\in T_\text{chg}}(p_g^\text{min} - p_g^s(t))$ for the full charging duration.
\item The SFC sets a target SOC at the end of the charging duration, i.e., $Q_\text{tar}^\text{ch}$, and charges its SD at each $t$ based on the proportion of price difference between time slots,  $Q_\text{tar}^\text{ch}$, and $Q_\text{ini}$  through
\begin{eqnarray}
e_\text{sf}^\text{SD}(t) = \frac{\left(p_g^\text{min}-p_g^s(t)\right)\left(Q_\text{tar}^\text{ch} - Q_\text{ini}\right)\sigma}{\sum_{t\in T_\text{chg}}\left(p_g^\text{min}-p_g^s(t)\right)},~\forall t\in T_\text{chg},\label{eqn:charging-sd}
\end{eqnarray}
where $\sigma$ is the efficiency of SFC's SD.
\end{enumerate}
We stress that the SFC cannot charge its SD at a rate more than its maximum allowed charging rate~\cite{Jin-J-TVT:2013}. Hence, \eqref{eqn:charging-sd} can be modified as
\begin{eqnarray}
e_\text{sf}^\text{SD}(t) = \min\left(\frac{\left(p_g^\text{min}-p_g^s(t)\right)\left(Q_\text{tar}^\text{ch} - Q_\text{ini}\right)\sigma}{\sum_{t\in T_\text{chg}}\left(p_g^\text{min}-p_g^s(t)\right)}, e_\text{sf}^\text{max}\right),\label{eqn:charge-sd}
\end{eqnarray}
where $e_\text{sf}^\text{max}$ is the maximum charging/discharging rate of the SFC's SD.

\textbf{Discharging of the SD:} The SFC discharges its SD at each time slot $t\in T_\text{dis}$ following a similar process to that described in the previous paragraph. Therefore, at each $t\in T_\text{dis}$: 1) the SFC derives $(p_g^s(t) - p_g^\text{max})$ and determines the overall $\sum_{t\in T_\text{dis}}\left(p_g^s(t) - p_g^\text{max}\right)$ for the whole duration of $T_\text{dis}$; and then 2) based on the proportion of price difference between discharging time slots, achieved SOC $Q_\text{tar}^\text{ch}$ during $T_\text{chg}$, and the target SOC $Q_\text{tar}^\text{dis}$ at the end of discharging period $T_\text{dis}$, the SFC discharges its SD using
\begin{eqnarray}
e_\text{sf}^\text{SD}(t) = - \min\Bigg(\frac{(p_g^s(t) - p_g^\text{max})(Q_\text{tar}^\text{ch}-Q_\text{tar}^\text{dis})\sigma}{\sum_{t\in T_\text{dis}}(p_g^s(t) - p_g^\text{max})},\nonumber\\ e_\text{sf}^\text{max}, E_\text{sf}^\text{eqp}(t)\Bigg),\label{eqn:discharge-sd}
\end{eqnarray}
for all $t\in T_\text{dis}$. As shown in \eqref{eqn:discharge-sd}, during discharge, the SFC cannot drain its SD by more than what is required by equipment as it would result in a negative requirement in \eqref{eqn:requirement-withSD}. The negative sign in \eqref{eqn:discharge-sd} emphasizes that the SD is discharging during $T_\text{dis}$.

By adopting \eqref{eqn:charge-sd} and \eqref{eqn:discharge-sd} the SFC is enabled to charge its SD during lower price periods and discharge it during higher price periods, which consequently reduces the cost of energy trading of the SFC. We note that a similar idea has been used previously to reduce the energy consumption cost of different energy entities by using batteries~\cite{Fang-J-CST:2012}. However, in this work the inclusion of a Stackelberg game with this charging-discharging scheme in each time slot makes the RUs with DER part of the system, and thus significantly further reduces the costs to the SFC, as will be shown via numerical experiments in the next section. The choice of two thresholds provides the SFC with the flexibility to choose different ranges of charging, discharging and idle durations. For example,  if $p_g^\text{max} = p_g^\text{min}$, the threshold of the proposed scheme would merge with the choice of threshold proposed in \cite{Wang-JTSG:2013}. However, the decision making mechanism of the charging and discharging amount at each time as proposed in this paper is completely different from that in \cite{Wang-JTSG:2013}.
\section{Case Study}\label{sec:case-study}
%
 \begin{figure}[t]
\centering
\includegraphics[width=\columnwidth]{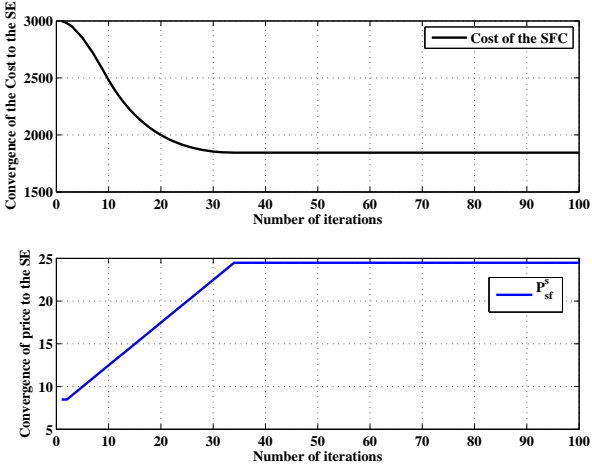}
\caption{Convergence of the proposed scheme to the SE.} \label{fig:convergence}
\end{figure}
For numerical case studies, we consider a number of RUs in the smart community that are interested in selling their energy to the SFC. Typical energy generation of each RU $n$ from its DERs is assumed to be 10 kWh~\cite{NREL_wind_generation:2009}, and is considered to be the same for all RUs in the network. The required energy by the SFC is  assumed to be 50 kWh during the considered time. As shown in \eqref{eqn:relation-en-pn}, the minimum requirement of each RU $n$ depends on its preference parameter $k_n$, which is chosen separately for different RUs, and is considered to be sufficiently large\footnote{For this case study, each $k_n$ is generated as a uniformly distributed random variable from the range $\left[90, 150\right]$.} such that \eqref{eqn:relation-en-pn} does not possess any negative values and $e_n$ is at least equal to $E_n^\text{min}$. The grid's per-unit sale price is assumed to be 60 cents~\cite{Jin-J-TVT:2013}, whereby the SFC sets its initial price to be $8.45$ cents per kWh\footnote{Which is the buy back price of the grid~\cite{Tushar-TSG:2013}.} to pay to each RU. It is very important to highlight that all parameter values are particular to this study and may vary according to the need of the SFC, power generation of the grid, weather conditions of the day, time of the day/year, and the country.

In Fig.~\ref{fig:convergence}, the convergence of the SFC's total cost to the SE by following Algorithm~\ref{alg:1} is shown for a network of five RUs. Here we see that although the SFC wants to minimize its total cost, it cannot manage to do so with its initial choice of price for payment to the RUs. In fact, through interaction with each RU of the network, the SFC eventually increases its price in each iteration to encourage the RUs to sell more, and consequently the cost continuously reduces. As can be seen from Fig.~\ref{fig:convergence}, the SFC's choice of equilibrium price, and consequently the minimum total cost, reaches its SE after the $34^\text{th}$ iteration.
 \begin{figure}[b]
\centering
\includegraphics[width=\columnwidth]{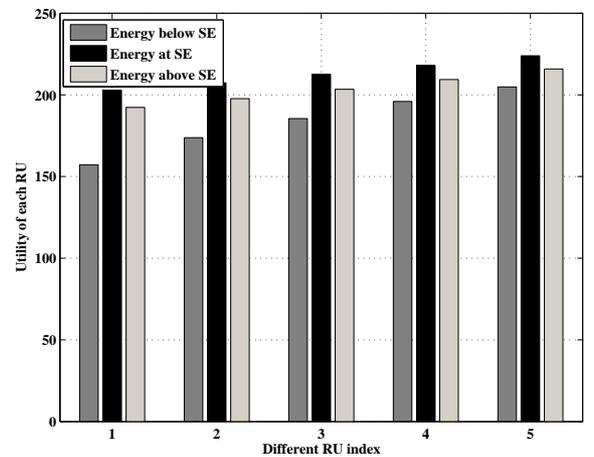}
\caption{Utility achieved by each RU at the SE.} \label{fig:utility-each-RU}
\end{figure}

As the SFC's total cost of energy purchase reaches its SE, the RUs in the network also reach their best utilities by playing their best strategies in response to the price offered by the SFC. We show the utility achieved by each RU at the SE in Fig.~\ref{fig:utility-each-RU}. As discussed in Definition~\ref{definition:1}, it is shown in Fig.~\ref{fig:utility-each-RU} that any deviation from the choice of energy consumption at the SE assigns a lower utility to the RU. In Fig.~\ref{fig:utility-each-RU} we compare both of the cases: 1) choice of energy more than the SE amount and 2) choice of energy lower than the SE amount, with the SE energy choice by each RU. It shows that only the SE assigns the maximum utility to each RU, and thus establishes a stable solution of the game.
\begin{table}[t]
\centering
\caption{Effect of the number of RUs on the total cost (in dollar) incurred by the SFC ($E^\text{req}_\text{sf} = 150$ kWh, $p_g^s = 70$ cents/kWh).}
\begin{tabular}{|c|c|c|c|c|c|}
\hline
Number of RU & 5 & 10 & 15 & 20 & 25\\
\hline
Cost (Baseline) & 105 & 105 & 105 & 105 & 105\\
\hline
Cost (Proposed) & 84.23 & 64.38 & 44.20 & 23.79 & 2.78 \\
\hline
$\%$ Reduction & 19.78 & 38.68 & 57.89 & 77.34 & 97.34\\
\hline
\end{tabular}
\label{tab:cost-vs-rus}
\end{table}
%
\begin{table}[t]
\centering
\caption{Effect of change of SFC's required energy on its total cost in dollars ($N = 10, p_g^s = 70$ cents/kWh).}
\begin{tabular}{|c|c|c|c|c|c|}
\hline
$E^\text{req}_\text{sf}$ & 60 & 70 & 80 & 90 & 100\\
\hline
Cost (Baseline) & 42.0 & 49.0 & 56.0 & 63.0 & 70.0\\
\hline
Cost (Proposed) & 1.384 & 8.384 & 15.38 & 22.38 & 29.38 \\
\hline
$\%$ Reduction & 96.70 & 82.89 & 72.53 & 64.47 & 58.02\\
\hline
\end{tabular}
\label{tab:cost-vs-req}
\end{table}

In Tables~\ref{tab:cost-vs-rus} and \ref{tab:cost-vs-req}, we investigate how the proposed scheme captures the change in total cost to the SFC  as different parameters, such as the number of RUs and the SFC's energy requirement,  change in the system. We compare the results with a baseline approach that does not have any DER facility, i.e., the SFC depends on the grid for all its energy. First, in Table~\ref{tab:cost-vs-rus}, the cost to the SFC is shown to gradually decrease for the proposed case as the total number of RUs increases in the network. This is due to the fact that as the number of RUs increases in the system, the SFC can buy more energy at a cheaper rate from more RUs, and consequently becomes less dependent on the grid's more expensive energy. Hence, the cost reduces eventually. However, due to the absence of any DERs, the cost to the SFC does not change with number of RUs in the network in the baseline approach, and the cost is shown to be significantly higher than for the proposed scheme. From Table.~\ref{tab:cost-vs-rus}, on average the cost reduction is $58.2\%$ for the proposed case compared to the baseline approach, with the considered parameter values, as the number of RUs varies in the system.

Whereas the cost to the SFC decreases with an increase in RUs in the system, we observe the opposite effect on cost while the SFC's energy requirement increases. As shown in Table~\ref{tab:cost-vs-req}, the cost to the SFC increases for both the proposed and baseline approaches as the energy required by the SFC increases. In fact, it is trivial to observe that needing more energy leads the SFC to spend more on buying energy, which consequently increases the cost. Nonetheless, the proposed scheme needs to spend less to buy the same amount of energy due to the presence of DERs of the RUs, and thus noticeably benefits from its energy trading, in terms of total cost, when compared to the baseline scheme. From Table~\ref{tab:cost-vs-req}, the SFC's average cost is $74.9\%$ lower than that of the baseline approach for the considered changes in SFC's energy requirements.

\begin{figure}[t]
\centering
\includegraphics[width=\columnwidth]{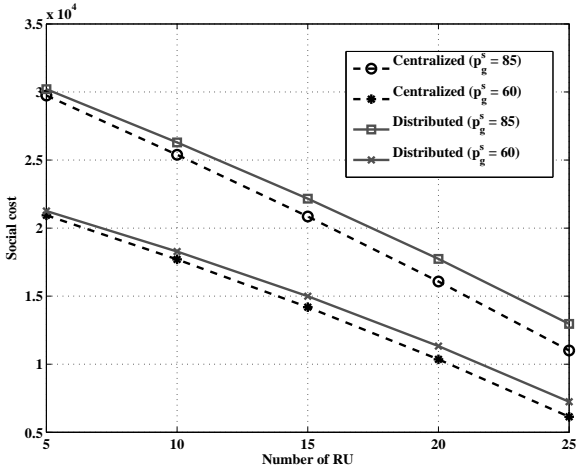}
\caption{Comparison of social cost obtained by the proposed distributed scheme with respect to the centralized scheme as the number of RUs varies in the network.} \label{fig:CentralVsPropose}
\end{figure}
As we have discussed above, it is also possible to optimally manage energy between RUs and the SFC via a centralized control system to minimize the social cost\footnote{Which is the difference between the total cost incurred by the SFC and total utility achieved by all RUs in the system.} if private information such as $k_n$ and $E_n^\text{gen}~\forall n$ is available to the controller. In this regard, we observe the performance in terms of social cost for both the centralized and proposed distributed  schemes for two different price schemes in Fig.~\ref{fig:CentralVsPropose}. As can be seen from the figure, the social cost attained by adopting the distributed scheme  is close to the optimal centralized scheme at the SE of the game in both the cases. However, the centralized scheme has access to the private information of each RU. Hence, the controller can optimally manage the energy, and as a result there is better performance in terms of reducing the SFC's cost compared to the proposed scheme. According to Fig.~\ref{fig:CentralVsPropose}, as the number of RUs increases in the network from $5$ to $25$, the average social cost for the proposed distributed scheme is only $7.07\%$ and $6.75\%$ higher than for the centralized scheme for $p_g^s = 85$ and $60$ cents/kWh respectively. This is a very promising result considering that the system is distributed.

\begin{figure}[t]
\centering
\includegraphics[width=\columnwidth]{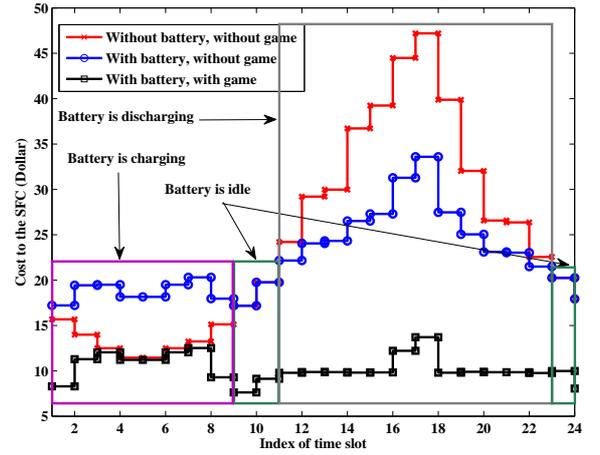}
\caption{Comparison of the cost incurred by the SFC at different times of the day with and without an SD.} \label{fig:CostVsTime}
\end{figure}

Having insight into the properties of the proposed Stackelberg game, we now show the performance of the proposed scheme when the SFC is equipped with an SD. For this purpose, we assume that the SFC has a 100~kWh SD with an efficiency of $0.9$ and a maximum charging-discharging rate of 24 kWh~\cite{Sousa-J-TSG:2012}. The price at different times of the day is obtained from~\cite{Jin-J-TVT:2013}, and the maximum and minimum price thresholds are considered to be $p_g^\text{min} = 40$ and $p_g^\text{max} = 45$ cents per kWh respectively. The demand of the SFC at different times of the day is chosen randomly from $\left[300, 700\right]$ kWh. For $5$ RUs in the system, we show the cost of the SFC at different times of the day for three different cases in Fig.~\ref{fig:CostVsTime}. These cases are 1) when the SFC does not have any SD and does not take part in the game,  2) when the SFC has an SD but does not take part in the game, and finally, 3) when the SFC has an SD and also plays the game with the RUs following Algorithm~\ref{alg:1}.

As can be seen from Fig.~\ref{fig:CostVsTime}, during the period when the grid price is low, the cost to the SFC is higher for cases 2 and 3 compared to case 1. In fact, due to the lower price, the SFC is more interested in charging its SD during this time so as to use it in peak hours. Hence, its required energy is more than the case without the SD. As a result, the cost is higher. However, the cost without the proposed game is considerably higher than the cost when the SFC and RUs interact with each other via Algorithm~\ref{alg:1}. The reason is that without playing the game, the SFC needs to buy all its energy from the grid including the energy for its SD. By contrast, the proposed game allows the SFC to pay the RUs a lower price than the grid's price to buy some of its required energy. Consequently, the SFC benefits in terms of its reduced total cost of energy purchase.

During peak hours, the cost to the SFC is significantly higher for case 1. In this case, the SFC needs to buy all its required energy from the grid at a significantly higher price. However, for the case when the SFC possesses an SD, the cost is lower as the stored energy allows the SFC to buy less from the grid compared to the previous case. Nevertheless, the most impressive performance is observed for case 3 when the SFC with an SD plays the Stackelberg game with the RUs following Algorithm~\ref{alg:1}. On the one hand, the stored energy  allows the SFC to buy a lower amount of energy during the peak hour like in case 2. On the other hand, unlike the other two cases, by taking part in the Stackelberg game the SFC manages to buy a certain fraction of its requirement from the RUs at a cheaper rate, compared to the grid's price, which minimizes its total cost of energy purchase noticeably. As Fig.~\ref{fig:CostVsTime} shows, on average, the cost reduction of the proposed case is $53.8\%$ compared to the case in which the SFC does not take part in the game. As can be seen from Fig.~\ref{fig:CostVsTime}, the performance is even  more impressive when compared to case 1.

\begin{figure}[t]
\centering
\includegraphics[width=\columnwidth]{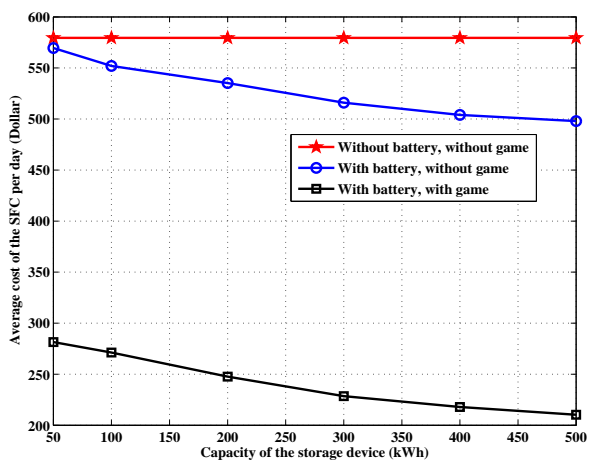}
\caption{Change in cost savings with the change of the capacity and charging rate of the SD.} \label{fig:CostSavings}
\end{figure}
However, cost savings are greatly affected by the SD's characteristics such as its capacity. In fact, for a particular charging rate, higher capacity can considerably assist the SFC to improve its performance in terms of average cost savings during a day, as shown in Fig.~\ref{fig:CostSavings}. As the figure illustrates, the cost saving for an SFC by using an SD increases with an increase in the capacity of the SD. According to Fig.~\ref{fig:CostSavings}, the savings in daily cost are, on average, $54.02\%$ and $58.1\%$ per day for the proposed case compared to the cases when the SFC does not play the game and when the SFC neither plays the game nor has any SDs respectively.

\section{Conclusion}\label{sec:conclusion}
In this paper, we have presented an energy management scheme for a smart community using a non-cooperative Stackelberg game. We have designed a system model suitable for applying the game, and have shown the existence of a strategy proof, unique Stackelberg equilibrium (SE), by exploring the properties of the game. We have shown that the use of distributed energy resources (DERs) is beneficial for both the shared facility controller (SFC) and residential units (RUs) at the SE. We have proposed a distributed algorithm, which is guaranteed to reach the SE of the game. Further, we have extended the scheme to the case in which the SFC has a storage device (SD). We have designed an effective charging-discharging scheme, for the SFC's SD based on the grid's price, which has been shown to have considerable influence on the cost incurred by the SFC. By the proposed charging-discharging scheme, the average cost to the SFC during a day has been shown to be reduced markedly compared to the case without a SD.

The proposed work can be extended in various ways. An interesting extension would be to check the impact of discriminate pricing among the RUs on the outcome of the scheme. Another compelling addition would be to determine how to set the threshold on the grid's price. Furthermore, quantifying the inconvenience that the SFC/RUs face during their interaction is another possible future investigation based on this work.

\begin{IEEEbiography}[{\includegraphics[width=1in,height=1.25in,clip,keepaspectratio]{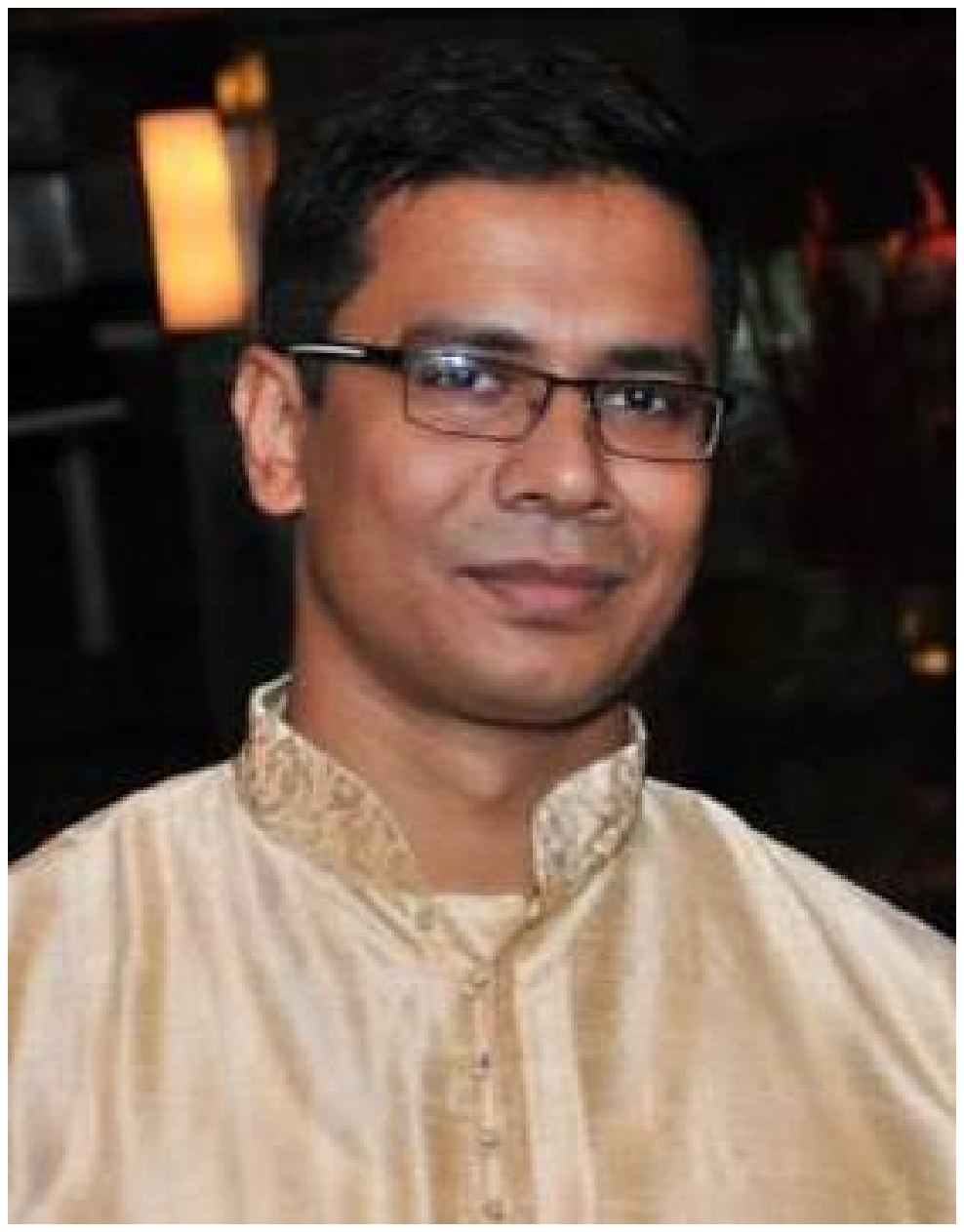}}]
{Wayes Tushar}(S'06, M'13)  received the B.Sc. degree in Electrical and Electronic Engineering from Bangladesh University of Engineering and Technology (BUET), Bangladesh, in 2007 and the Ph.D. degree in Engineering from the Australian National University (ANU), Australia in 2013. Currently, he is a postdoctoral research fellow at Singapore University of Technology and Design (SUTD), Singapore. Prior joining SUTD, he was a visiting researcher at National ICT Australia (NICTA) in ACT, Australia. He was also a visiting student research collaborator in the School of Engineering and Applied Science at Princeton University, NJ, USA during summer 2011. His research interest includes signal processing for distributed networks, game theory and energy management for smart grids. He is the recipient of two best paper awards, both as the first author, in Australian Communications Theory Workshop (AusCTW), 2012 and IEEE International Conference on Communications (ICC), 2013.
\end{IEEEbiography}
\begin{IEEEbiography}[{\includegraphics[width=1in,height=1.25in,clip,keepaspectratio]{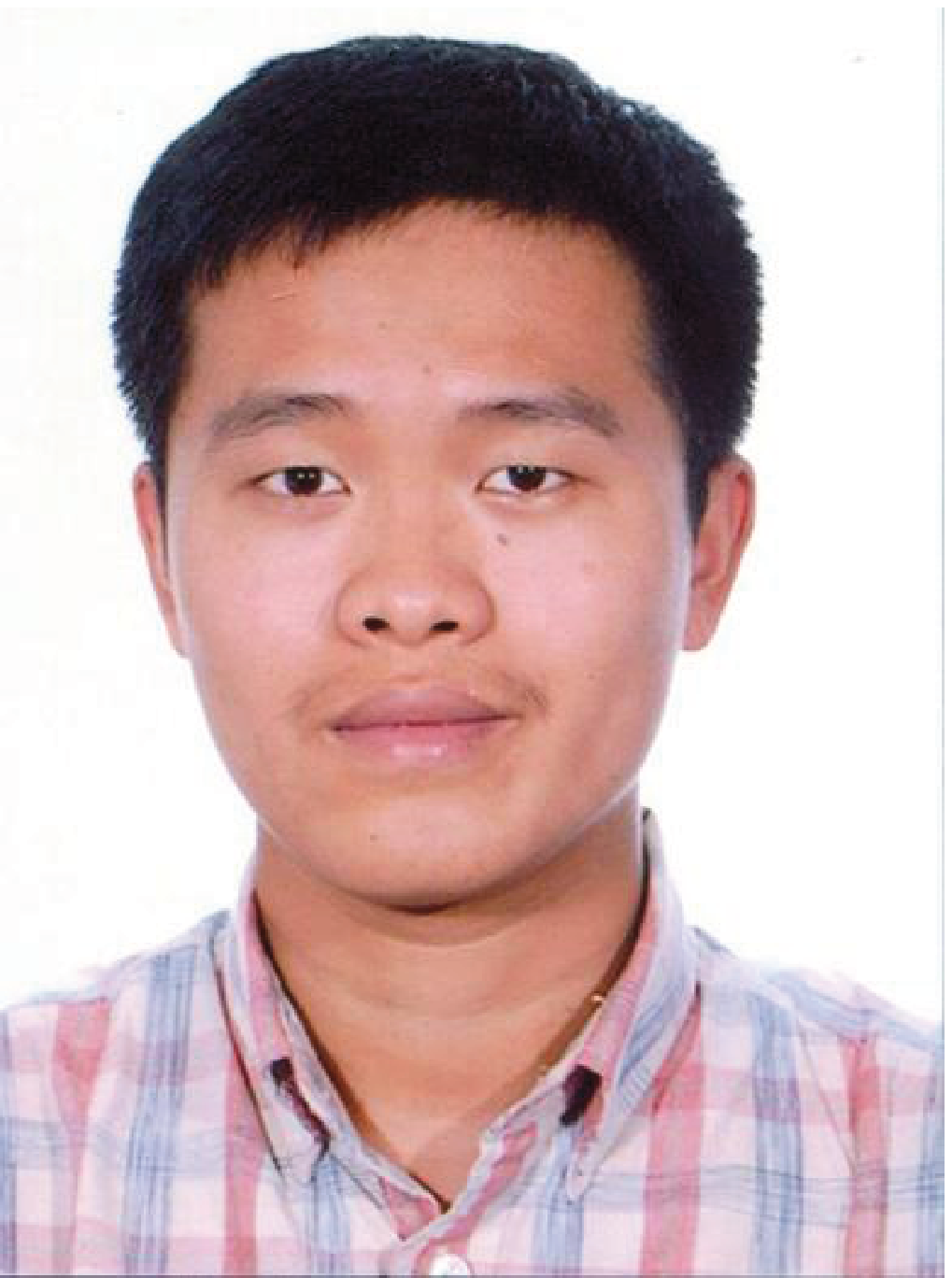}}]
{Bo Chai} received the B.Sc degree in Automation from Zhejiang University, Hangzhou, China, in 2010, when he also graduated from Chu Kochen 
Honors College. He was a visiting student at Simula Research Laboratory, Oslo, Norway in 2011. Currently, he is a member of the Group of Networked Sensing and Control (IIPC-nesC) in the State Key Laboratory of Industrial Control Technology, Zhejiang University, and he is also a visiting student in Singapore University of Technology and Design from 2013 to 2014. His research interests include smart grid and cognitive radio.
\end{IEEEbiography}
\begin{IEEEbiography}[{\includegraphics[width=1in,height=1.25in,clip,keepaspectratio]{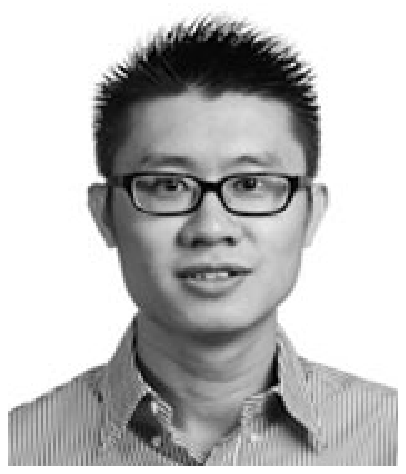}}]
{Chau Yuen} received the B. Eng and PhD degree from Nanyang Technological University, Singapore in 2000 and 2004 respectively. Dr Yuen was a Post Doc Fellow in Lucent Technologies Bell Labs, Murray Hill during 2005. He was a Visiting Assistant Professor of Hong Kong Polytechnic University in 2008. During the period of 2006 2010, he worked at the Institute for Infocomm Research (Singapore) as a Senior Research Engineer. He joined Singapore University of Technology and Design as an assistant professor from June 2010. He serves as an Associate Editor for IEEE Transactions on Vehicular Technology. On 2012, he received IEEE Asia-PaciÞc Outstanding Young Researcher Award.
\end{IEEEbiography}
\begin{IEEEbiography}[{\includegraphics[width=1in,height=1.25in,clip,keepaspectratio]{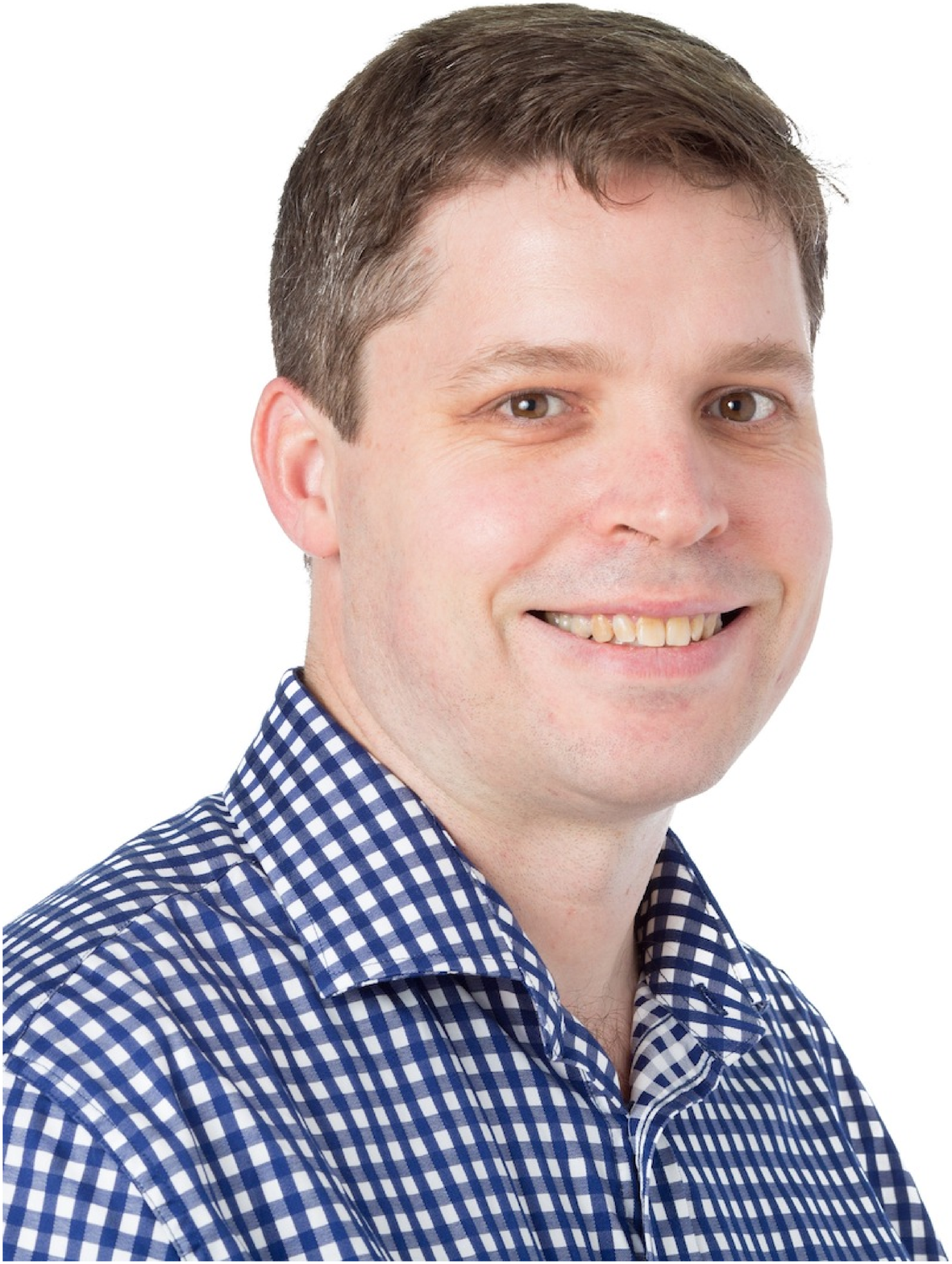}}]
{David Smith} is a Senior Researcher at National ICT Australia (NICTA) and is an adjunct Fellow with the Australian National University (ANU), and has been with NICTA and the ANU since 2004. He received the B.E. degree in Electrical Engineering from the University of N.S.W. Australia in 1997, and while studying toward this degree he was on a CO-OP scholarship. He obtained an M.E. (research) degree in 2001 and a Ph.D. in 2004 both from the University of Technology, Sydney (UTS), and both in Telecommunications Engineering. His research interests are in technology and systems for wireless body area networks; game theory for distributed networks; mesh networks; disaster tolerant networks; radio propagation and electromagnetic modeling; MIMO wireless systems; coherent and non-coherent space-time coding; and antenna design, including the design of smart antennas. He also has research interest in distributed optimization for smart grid. He has also had a variety of industry experience in electrical engineering; telecommunications planning; radio frequency, optoelectronic and electronic communications design and integration. He has published over 80 technical refereed papers and made various contributions to IEEE standardization activity; and has received four conference best paper awards.
\end{IEEEbiography}
\begin{IEEEbiography}[{\includegraphics[width=1in,height=1.25in,clip,keepaspectratio]{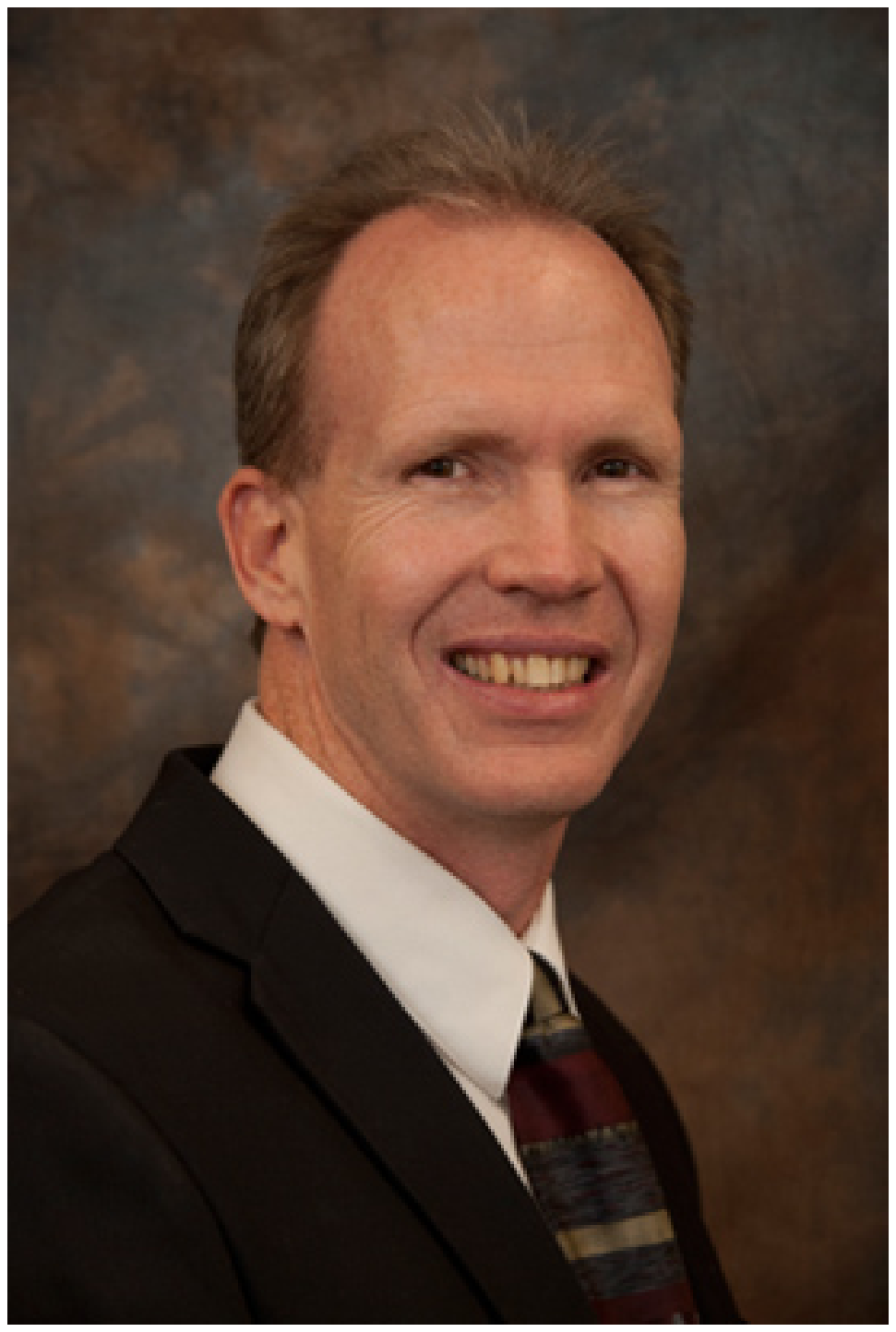}}]
{Kristin L. Wood} joined the faculty at the University of Texas in September 1989 after completing his doctoral work and established a computational and experimental laboratory for research in engineering design and manufacturing, in addition to a teaching laboratory for prototyping, reverse engineering measurements, and testing. During the 1997-98 academic year, Dr Wood was a Distinguished Visiting Professor at the United States Air Force Academy where he worked with USAFA faculty to create design curricula and research within the Engineering Mechanics / Mechanical Engineering Department. Through 2011, Dr Wood was a Professor of Mechanical Engineering, Design \& Manufacturing Division at The University of Texas at Austin. He was a National Science Foundation Young Investigator, the ``Cullen Trust for Higher Education Endowed Professor in Engineering", ``University Distinguished Teaching Professor", and the Director of the Manufacturing and Design Laboratory (MaDLab) and MORPH Laboratory.

Dr Wood has published more than 300 commentaries, refereed articles and books, and has received three ASME Best Research Paper Awards, two ASEE Best Paper Awards, an ICED Best Research Paper Award, the Keck Foundation Award for Excellence in Engineering Education, the ASEE Fred Merryfield Design Award, the NSPE AT\&T Award for Excellence in Engineering Education, the ASME Curriculum Innovation Award, the Engineering Foundation Faculty Excellence Award, the Lockheed Martin Teaching Excellence Award, the Maxine and Jack Zarrow Teaching Innovation Award, the Academy of University Distinguished Teaching Professors' Award, and the Regents' Outstanding Teacher Award. 
\end{IEEEbiography}
\begin{IEEEbiography}[{\includegraphics[width=1in,height=1.25in,clip,keepaspectratio]{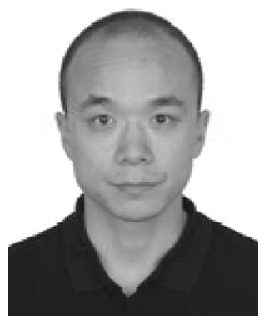}}]
{Zaiyue Yang} (M'10) received the B.S. and M.S. degrees from Department of Automation, University of Science and Technology of China, Hefei, China, in 2001 and 2004, respectively, and the Ph.D. degree from Department of Mechanical Engineering, University of Hong Kong, Hong Kong, in 2008. He then worked as Postdoctoral Fellow and Research Associate in the Department of Applied Mathematics, Hong Kong Polytechnic University before joining Zhejiang University, Hangzhou, China, in 2010. He is currently an associate professor there. His current research interests include smart grid, signal processing and control theory.
\end{IEEEbiography}
\begin{IEEEbiography}[{\includegraphics[width=1in,height=1.25in,clip,keepaspectratio]{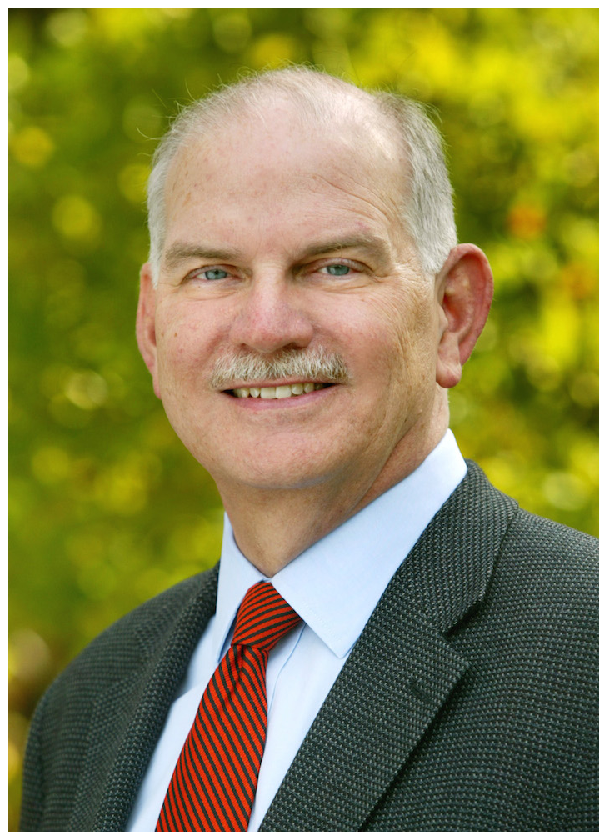}}]
{H. Vincent Poor} (S'72, M'77, SM'82, F'87) received the Ph.D. degree in EECS from Princeton University in 1977.  From 1977 until 1990, he was on the faculty of the University of Illinois at Urbana-Champaign. Since 1990 he has been on the faculty at Princeton, where he is the Michael Henry Strater University Professor of Electrical Engineering and Dean of the School of Engineering and Applied Science. Dr. Poor?s research interests are in the areas of stochastic analysis, statistical signal processing, and information theory, and their applications in wireless networks and related fields such as social networks and smart grid. Among his publications in these areas are the recent books \emph{Principles of Cognitive Radio} (Cambridge University Press, 2013) and \emph{Mechanisms and Games for Dynamic Spectrum Allocation} (Cambridge University Press, 2014).

Dr. Poor is a member of the National Academy of Engineering and the National Academy of Sciences, and is a foreign member of Academia Europaea and the Royal Society.   He is also a  fellow of the American Academy of Arts \& Sciences, the Royal Academy of Engineering (UK) and the Royal Society of Edinburgh. In 1990, he served as President of the IEEE Information Theory Society, and in 2004-07 he served as the Editor-in-Chief of the \emph{IEEE Transactions on Information Theory}. He received a Guggenheim Fellowship in 2002 and the IEEE Education Medal in 2005. Recent recognition of his work includes the 2014 URSI Booker Gold Medal, and honorary doctorates from Aalborg University, the Hong Kong University of Science and Technology and the University of Edinburgh.  
\end{IEEEbiography}
\end{document}